\def\temp{&} \catcode`&=\active \let&=\temp
\newcommand{\TAB}{\quad}
\newcommand{\CSTAR}{\ensuremath{C^*}}
\DeclareMathSymbol{\mlq}{\mathord}{operators}{'134}
\DeclareMathSymbol{\mrq}{\mathord}{operators}{'42}
\newcommand{\mathquote}[1]{\mlq #1 \mrq}
\newcommand{\BLANK}{{-}}
\newenvironment{TIKZCD}{\[\begin{tikzcd}}{\end{tikzcd}\]\ignorespacesafterend}
\newcommand{\MATRIX}[1]{\left(\hskip \arraycolsep\begin{matrix}#1\end{matrix}\hskip \arraycolsep\right)}
\newcommand{\SMATRIX}[1]{\left(\begin{smallmatrix}#1\end{smallmatrix}\right)}
\renewcommand{\subset}{\subseteq}
\renewcommand{\phi}{\varphi}
\renewcommand{\epsilon}{\varepsilon}
\newcommand{\xRightarrow}[2][]{\ext@arrow 0359\Rightarrowfill@{#1}{#2}}
\newcommand{\INTEGERS}{\mathbb{Z}}
\newcommand{\COMPLEX}{\mathbb{C}}
\DeclareMathOperator{\mono}{mono}
\DeclareMathOperator{\epi}{epi}
\DeclareMathOperator{\trace}{tr}
\newcommand{\OTHER}{\mathrm{other}}
\newcommand{\inputs}{\mathrm{in}}
\newcommand{\outputs}{\mathrm{out}}
\newcommand{\win}{\mathrm{win}}
\newcommand{\winningprob}{\omega}
\newcommand{\losing}{\mathrm{losing}}
\newcommand{\game}{\mathrm{game}}
\newcommand{\bias}{\mathrm{bias}}
\newcommand{\parity}{\mathrm{par}}
\newcommand{\CHSH}{\mathrm{CHSH}}
\newcommand{\Alice}{\operatorname{Alice}}
\newcommand{\Bob}{\operatorname{Bob}}
\newcommand{\player}{\operatorname{player}}
\DeclareMathOperator{\magicsquare}{magic\ square}
\DeclareMathOperator{\magicpenta}{magic\ pentagram}
\newcommand{\twoplayer}{\operatorname{two~player}}
\newcommand{\words}{words}
\newcommand{\ordertwo}{order-$2$}
\newcommand{\spectrum}{\sigma}
\DeclareMathOperator{\supp}{supp}
\DeclareMathOperator{\ucp}{ucp}
\DeclareMathOperator{\proj}{proj}
\newcommand{\ILEQ}{\trianglelefteq}
\newcommand{\CENTER}{\mathcal{Z}}
\newcommand{\CONTINUOUS}{C}
\newcommand{\tensor}{\mathbin{\otimes}}
\tikzset{help lines/.style={very thin, color=lightgray, dashed}}
\tikzset{axis lines/.style={very thin, color=lightgray}}
\def\mathcolor#1#{\@mathcolor{#1}}
\def\@mathcolor#1#2#3{%
  \protect\leavevmode
  \begingroup
    \color#1{#2}#3%
  \endgroup}
\let\originalleft\left
\let\originalright\right
\renewcommand{\left}{\mathopen{}\mathclose\bgroup\originalleft}
\renewcommand{\right}{\aftergroup\egroup\originalright}
\newtheoremstyle{MYBREAK}
  {}          
  {}          
  {\itshape}  
  {}          
  {\bfseries} 
  {:}         
  {\newline}  
  {}          
\newtheoremstyle{MYPLAIN}
  {\topsep}   
  {\topsep}   
  {\itshape}  
  {15pt}          
  {\bfseries} 
  {}         
  {5pt plus 1pt minus 1pt} 
  {}          
\theoremstyle{definition}
\newtheorem{DEF}{Definition}[section]
\theoremstyle{plain}
\newtheorem{PROP}[DEF]{Proposition}
\newtheorem{THM}[DEF]{Theorem}
\newtheorem{REM}[DEF]{Remark}
\theoremstyle{MYBREAK}
\title{The quantum commuting model (Ia):\\The CHSH game and other examples:\\Uniqueness of optimal states}
\author{Alexander Frei}
\date{\today}
\begin{document}
\maketitle

\begin{abstract}
  We present in this paper that the CHSH game admits one and only one optimal state and so remove all ambiguity of representations.
  More precisely, we use the well-known universal description of quantum commuting correlations as state space on the universal algebra for two player games,
  and so allows us to unambigiously compare quantum strategies as states an this common algebra. As such we find that the CHSH game leaves a single optimal state on this commong algebra.

  In turn passing to any non-minimal Stinespring dilation for this unique optimal state is the only source of ambiguity (including self-testing):\linebreak
  More precisely, any state on some operator algebra may be uniquely broken up into its minimal Stinespring dilation as an honest representation for the operator algebra followed by its vector state.
  Any other Stinespring dilation however arises simply as an extension of the minimal Stinespring dilation (i.e.~as an embedding of the minimal Hilbert space into some random ambient one).
  As such this manifests the only source of ambiguity appearing in most (but not all!) traditional self-testing results such as for the CHSH game as well as in plenty of similar examples.


  We then further demonstrate the simplicity of our arguments on the Mermin--Peres magic square and magic pentagram game.

  Most importantly however, we present this article as an illustration of operator algebraic techniques on optimal states and their quotients,\linebreak
  and we further pick up the results of the current article in another following one (currently under preparation) to derive a first robust \mbox{self-testing} result in the quantum commuting model.
\end{abstract}

\section{Stinespring dilation}
\label{STINESPRING}
The Stinespring dilation for states on operator algebras will play the major role of the current article ---
more precisely it provides a convenient approach for verifying uniqueness of optimal states.
As such we begin with an introduction for convenience of the reader new to operator algebras.

Meanwhile we moreover explain the issue of randomly chosen non-minimal Stinespring dilations (in random ambient spaces) which do not reflect any of the properties of the original state. As such this accounts for all of the ambiguity arising in most (but not all!) self-testing results in literature.

Consider a state on a unital operator algebra
\[
  \phi:A\to \COMPLEX:\TAB\phi\geq0,\TAB\phi(1)=1.
\]
From this we may define an inner product pairing
\[
  A\times A\to \COMPLEX:\TAB \braket{x|y}:= \phi(x^*y)
\]
which defines a Hilbert space. We formally denote the Hilbert space in braket notation such that the inner product conveniently reads as composition,
\[
  H=\mathquote{A\ket\phi}:\TAB  \phi(x^*y) = \bra\phi x^*y \ket\phi.
\]
This allows us to later split such expressions to our desire.%
\footnote{This can be made rigorous in the setting of Hilbert modules.}
Based on the Hilbert space we obtain an induced representation by left multiplication
\[
  A\to B\left(\overline{A\ket\phi}\right): \TAB a\big(x\ket\phi\big) := ax\ket\phi
\]
together with an induced vector state by $\ket\phi=1\ket\phi$:
\[
  B\left(\overline{A\ket\phi}\right)\to\COMPLEX:\TAB T\mapsto \bra\phi T\ket\phi.
\]
Both combined however retrieve the original state $\bra\phi a\ket\phi = \phi(a)$.\\
As such our state allows for a dilation to the vector state:
\setlength{\skip\footins}{1cm}
\begin{gather*}
  \begin{tikzcd}
    A \arrow[rr, "\phi(\BLANK)"]\drar &[-0.5cm]&[-0.5cm] \COMPLEX: \\
    & B\left(~\overline{A\ket\phi}=\mathrm{minimal~space}~\right)
    \urar[swap]{\bra\phi\BLANK\ket\phi}
  \end{tikzcd}
\end{gather*}
This defines the (so-called) minimal Stinespring dilation in the following sense:
Each dilation arises simply as an embedding of this minimal dilation into some random ambient space. More precisely, consider another dilation say
\[
\begin{tikzcd}
  A \arrow[rr, "\phi(\BLANK)"]\drar &[-0.5cm]&[-0.5cm] \COMPLEX. \\
  & B\left(~H=\mathrm{some~space}~\right) \urar[swap]{\bra{\phi'}\BLANK\ket{\phi'}}
\end{tikzcd}
\]
Then the space defines an ambient space for the minimal one above:\\
More precisely, the embedding is formally given by the isometry
\[
  A\ket\phi\subset A\ket{\phi'}\subset H:\TAB x\ket\phi= x\ket{\phi'},\TAB \ket\phi=\ket{\phi'}.
\]
Indeed one easily verifies using the above dilation
\[
  \bra\phi x^*y\ket\phi=\phi(x^*y)=\bra{\phi'}x^*y\ket{\phi'}.
\]
Using the orthogonal decomposition we may invoke matrix formalism
\begin{gather*}
  B\Big(H=\overline{A\ket\phi}\oplus\OTHER\Big)=
  \MATRIX{ B\left(\overline{A\ket\phi}\right) & B\left(\OTHER\to\overline{A\ket\phi}\right) \\ B\left(\overline{A\ket\phi}\to\OTHER\right) & B\Big(\OTHER\Big) }
\end{gather*}
and so the representation into the random ambient dilation reads:
\begin{gather*}
  A\to B\Big(A\ket\phi\oplus\OTHER\Big):\TAB a\mapsto \SMATRIX{a&*\\{*}&*},\\[2\jot]
  B\Big(A\ket\phi\oplus\OTHER\Big)\to\COMPLEX:\TAB \SMATRIX{a&*\\{*}&*}\mapsto\SMATRIX{\bra\phi&0}\SMATRIX{a&*\\{*}&*}\SMATRIX{\ket\phi\\0}.
\end{gather*}
As such we have found that each dilation simply arises as an embedding into some random ambient space, and the description in matrix formalism illustrates how these ambient spaces do not reflect any properties of the original state.

On the other hand, the Stinespring dilation defines also the minimal quotient to which the original state descends. More precisely, the minimal quotient is given by the kernel (see \cite[subsection 2.1]{KWASNIEWSKI-EXEL-CROSSED} and \cite[subsection 3.1]{KWASNIEWSKI-MEYER-ESSENTIAL-CROSSED})
\begin{gather*}
  \begin{tikzcd}
    0\rar & I(\phi)\rar & A\rar & B\left(\overline{A\ket\phi}\right):
  \end{tikzcd}\\
  \begin{tikzcd}
    A\rar & A/I(\phi)\subset B\left(\overline{A\ket\phi}\right)\rar{\phi(\BLANK)} & \COMPLEX
  \end{tikzcd}
\end{gather*}
while any other quotient lies in between:
\begin{gather*}
  I\ILEQ A:\TAB I\subset\ker(\phi)\implies I\subset I(\phi):\\[2\jot]
  \begin{tikzcd}
    A\rar & A/I\rar &
    A/I(\phi)
    \rar{\phi(\BLANK)} & \COMPLEX.
  \end{tikzcd}
\end{gather*}
Summarizing, every state decomposes uniquely as a representation onto its minimal quotient followed by its dilation as vector state
(a good mnemonic here is the analogous decomposition of morphisms into epi plus mono):
\begin{TIKZCD}
  A\rar{\mathrm{quotient}} & A/I(\phi)
  \rar{\mathquote{\bra\phi\BLANK\ket\phi}} & \COMPLEX:\TAB \phi=\mathquote{\mono\circ\epi}
\end{TIKZCD}
In turn, one may use the unique decomposition for comparison of states as we will illustrate in this article for the CHSH and the Mermin--Peres games.

Finally let us make another remark regarding the canonical decomposition:
Suppose one chooses some factorization via some unital completely positive map (in short ucp-map) instead of an honest representation (which accounts for a choice of POVM instead of some honest PVM):
\begin{TIKZCD}
  A\arrow[rr,"\phi(\BLANK)"]\drar[swap]{\ucp} && \COMPLEX \\
  & B\Big(~S=\mathrm{small~space}~\Big) \urar[swap]{\bra\phi\BLANK\ket\phi}
\end{TIKZCD}
Using a minimal Stinespring dilation for the ucp-map we obtain an embedding into another ambient space
\begin{TIKZCD}
  A\rar\dar[equal] & B\Big(H=\mathrm{ambient~space}\Big)\dar{S\subset H}\drar[bend left]{\bra\phi\BLANK\ket\phi} \\
  A\rar[swap]{\ucp} & B\Big(S=\mathrm{small~space}\Big) \rar[swap]{\bra\phi\BLANK\ket\phi} & \COMPLEX.
\end{TIKZCD}
In other words, this may be understood as a mistaken decomposition:
\[
  \phi = (\mono\circ\mono)\circ \epi  = \mono \circ (\mono\circ\epi) = \bra\phi\BLANK\ket\phi\circ \ucp.
\]
Moreover, our previous discussion reveals this also as an ambient space for the original state, and as such reveals our \enquote{small space} as an incompatible cut:
\begin{gather*}
  \begin{tikzpicture}[scale=2/3]
    \draw (-5.5,-3.5) rectangle (5.5,3.5);
    \draw[fill=black!10] (0,0) ellipse [x radius=3, y radius=2,rotate=0];
    \draw[dashed] (0,-3.5) -- (2,3.5);
    \node at (0,4) {$H=\mathrm{small~space}\oplus\mathrm{remainder}$};
    \node at (0,0) {$\overline{A\ket\phi}$};
    \node at (2,0) {$\ket\phi$};
    \node at (3,-2.5) {$S=\mathrm{small~space}$};
    \node at (-3,-2.5) {$\mathrm{remainder}$};
  \end{tikzpicture}
\end{gather*}
That is any choice of POVM (instead of an honest PVM) simply arises as the minimal Stinespring dilation embedded in some ambient space and then \textbf{cut incompatibly onto some small portion.}
As such this accounts for an additional, much worse source of ambiguity for self-testing.

This concludes our introduction and discussion on the Stinespring dilation from an operator algebraic perspective.
With this at hand, we may now proceed to nonlocal games and their optimal states --- and their uniqueness.

\section[Two-player algebra]{The two-player algebra}
\label{TWOPLAYER}

We continue with an introduction on the two-player algebra, which provides the universal setting for describing quantum correlations all as states on this common algebra --- and so allows us to compare them on a common ground.

It is well-known that the universal \CSTAR-algebra generated by a single projection valued measure agrees with the full group \CSTAR-algebra
\[
  \CSTAR(\INTEGERS/A)=\CSTAR\Big(e_1,\ldots,e_A\in\proj\Big|e_1+\ldots +e_A=1\Big)
\]
and as such we obtain as the universal \CSTAR-algebra per player for a given number of questions and answers the free product of groups
\[
  \CSTAR(X=\inputs|A=\outputs):=\CSTAR\Big(\INTEGERS/A*\ldots*\INTEGERS/A:\text{$X$-many}\Big).
\]
Finally a commuting pair of families of observables may be gathered algebraically by the maximal tensor product of their operator algebras and as such the algebra for two-player games has the universal description (with corresponding question and answer sets)
\[
  \CSTAR(\twoplayer):=\CSTAR(\Alice)\tensor\CSTAR(\Bob)=\CSTAR(X|A)\tensor\CSTAR(Y|B)
\]
and any quantum commuting strategy arises simply as a state
\[
  \phi:\CSTAR(\twoplayer)\to\COMPLEX:\TAB p(ab|xy)=\phi\Big(e(a|x)\tensor e(b|y)\Big).
\]
Indeed, applying any Stinespring dilation (see section \ref{STINESPRING}) one obtains the classical description of correlations as a commuting representation of projection valued measures together with a vector state,
\[
  \hspace{-1cm}\begin{tikzcd}[column sep = small]
    \CSTAR(\Alice)\tensor\CSTAR(\Bob) \arrow[rr, "\phi(\BLANK)\phantom{++++}"]\drar[swap]{\pi=\alpha\tensor\beta} &[1/2cm]&[1cm] \COMPLEX. \\
    & B\Big(~\overline{\CSTAR(\twoplayer)\ket\phi}\subset H~\Big) \urar[swap]{\bra\phi\BLANK\ket\phi}
  \end{tikzcd}
\]
This classical description however has the disadvantage that a Stinespring dilation introduces a level of ambiguity due to the chosen representation, which is basically the cause for the usual self-testing results.

Instead the description in terms of states on the two-player algebra allows now for a direct comparison of such on a common universal algebra, and so with this at hand we may now approach uniqueness of optimal states as follows.

\section[CHSH game]{CHSH game: single optimal state}\label{CHSH-GAME}
We proof in this section that the CHSH game has one and only one optimal state (compared to merely unique optimal correlation).
For this we begin with a quick review on the CHSH game phrased in the language of operator algebras.

Recall from \cite{CLAUSER-HORNE-SHIMONY-HOLT} that the CHSH game is a binary-input binary-output game, and as such the two-player algebra
is generated by a pair of \ordertwo\ unitaries for every player (compare section \ref{TWOPLAYER}):
\begin{gather*}
  \CSTAR(\twoplayer) = \CSTAR(\Alice)\tensor\CSTAR(\Bob):\\
  \CSTAR(\Alice)=\CSTAR(u^2=1=v^2)=\CSTAR(\Bob).
\end{gather*}
Note that the CHSH game is an XOR-game as introduced in \cite{CLEVE-HOYER-TONER-WATROUS}.\linebreak
As such we may equivalently describe the CHSH game by its bias polynomial
\[
  \win = \{a\oplus b =xy\}:\TAB \CHSH := u \tensor u + u \tensor v + v\tensor u - v\tensor v
\]
and an optimal state for the corresponding nonlocal game translates to maximizing the bias (see appendix \ref{XOR-GAMES} for a summary on XOR games):
\[
  \phi:\CSTAR(\twoplayer)\to\COMPLEX:\TAB \phi(\CHSH)=\|\CHSH\|.
\]
This concludes our quick review on the CHSH game. With this at hand, we may now proof the desired uniqueness of the optimal state for the CHSH game.

\begin{THM} The CHSH game has one and only one optimal state.\\
More precisely, consider the bias polynomial for the CHSH game
\[
  \CHSH=u\tensor u+ u\tensor v+v\tensor u -v\tensor v
\]
and suppose a state realises the maximal bias (see~proposition \ref{SYMMETRIC-SPECTRUM})
\[
  \phi:\CSTAR(\twoplayer)\to\COMPLEX:\TAB\phi(\CHSH)=\|\CHSH\|.
\]
Then the state necessarily factors over the matrix algebra
\begin{TIKZCD}
  \CSTAR(\Alice)\tensor\CSTAR(\Bob) \drar[swap]{\BLANK\tensor\BLANK} \arrow[rr, ""{name=phi}] &[-0.5cm]&[0.5cm] \COMPLEX \\
   & M(2)\tensor M(2) \urar\arrow[from=phi, phantom, "\phi(\BLANK)\phantom{++}"] &
\end{TIKZCD}
given by the canonical quotient (see proposition \ref{PAULI-ALGEBRA})
\begin{TIKZCD}[column sep=1cm]
  M(2)=\CSTAR(u^2=1=v^2|\{u,v\}=0).
\end{TIKZCD}
Note that the collection of words in generators reduces, by the anticommutation in the quotient, to the finite list of words (up to linear span)
\[
  \words(u,v)=\{1,u,v,uv=-vu,uvu=-v,\ldots\}\implies \{1,u,v,uv\}
\]
whence every state is already linearly determined on those:
\[
  \phi:M(2)\tensor M(2)\to\COMPLEX:\TAB \phi\Big(\{1,u,v,uv\}\tensor\{1,u,v,uv\}\Big)=\ ?
\]
With this observation in mind, the optimal state is uniquely and entirely determined (including all higher and mixed moments) by the correlation table
\[\def\arraystretch{1.3}\begin{array}{c|cccc}
    \phantom{+}\phi(\ldots)=\phantom{+} & \BLANK\tensor1 & \BLANK\tensor u & \BLANK\tensor v & \BLANK\tensor uv \\[\jot]
    \hline
    1\tensor\BLANK  & 1 & 0 &  0 & 0 \\
    u\tensor\BLANK  & 0      & \phantom{+}\tfrac{1}{\sqrt2}\phantom{+} &  \phantom{+}\tfrac{1}{\sqrt2}\phantom{+} & 0 \\
    v\tensor\BLANK  & 0      & \phantom{+}\tfrac{1}{\sqrt2}\phantom{+} & -\tfrac{1}{\sqrt2}\phantom{+} & 0 \\
    uv\tensor\BLANK & 0      & 0 &  0 & -1
  \end{array}\]
from which there is no ambiguity left anymore! (Recall section \ref{STINESPRING}.)\\
In short, there is one and only one optimal state for the CHSH game.
\end{THM}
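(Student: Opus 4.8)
The plan is to turn optimality of $\phi$ into a concrete list of relations on the cyclic GNS vector $\ket\phi$ of Section~\ref{STINESPRING}, using an explicit sum-of-squares certificate for the Tsirelson bound, and then to upgrade those \emph{vector} relations into genuine \emph{operator} relations that force the factorisation through $M(2)\tensor M(2)$. Throughout I abbreviate Alice's observables as $a=u\tensor1$, $a'=v\tensor1$ and Bob's as $b=1\tensor u$, $b'=1\tensor v$; these are order-$2$ unitaries with $a,a'$ commuting with $b,b'$ across the two legs.

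First I would record the algebraic identity
\[
  2\sqrt2\cdot1-\CHSH=\tfrac1{\sqrt2}\left(X^2+Y^2\right),\qquad X=a-\tfrac1{\sqrt2}(b+b'),\quad Y=a'-\tfrac1{\sqrt2}(b-b'),
\]
which one verifies by expanding and using $a^2=a'^2=b^2=b'^2=1$ together with the cross-leg commutations; exchanging the two players gives a dual certificate with $X'=b-\tfrac1{\sqrt2}(a+a')$ and $Y'=b'-\tfrac1{\sqrt2}(a-a')$. As $X,Y,X',Y'$ are self-adjoint, each certificate re-proves $\phi(\CHSH)\le\|\CHSH\|=2\sqrt2$ (Proposition~\ref{SYMMETRIC-SPECTRUM}); more to the point, optimality $\phi(2\sqrt2\cdot1-\CHSH)=0$ forces $\phi(X^2)=\phi(Y^2)=0$, i.e.\ $\|X\ket\phi\|^2=\|Y\ket\phi\|^2=0$ in the GNS picture, yielding the four vector relations
\[
  a\ket\phi=\tfrac1{\sqrt2}(b+b')\ket\phi,\quad a'\ket\phi=\tfrac1{\sqrt2}(b-b')\ket\phi,\quad b\ket\phi=\tfrac1{\sqrt2}(a+a')\ket\phi,\quad b'\ket\phi=\tfrac1{\sqrt2}(a-a')\ket\phi.
\]

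Next I would feed these relations back into themselves: substituting $a'\ket\phi$, commuting $a$ past $b,b'$, then substituting $a\ket\phi$ and using $b^2=b'^2=1$ gives $aa'\ket\phi=\tfrac12[b,b']\ket\phi$ and $a'a\ket\phi=-\tfrac12[b,b']\ket\phi$, so that the anticommutator annihilates the cyclic vector,
\[
  \{a,a'\}\ket\phi=0,\qquad\text{and symmetrically}\qquad\{b,b'\}\ket\phi=0.
\]
The decisive observation — and the step I expect to carry the real content — is that $\{a,a'\}$ is not merely killing $\ket\phi$ but is \emph{central}: the elementary identities $a\{a,a'\}=\{a,a'\}a$ and $a'\{a,a'\}=\{a,a'\}a'$ (both immediate from $a^2=a'^2=1$) show it commutes with Alice's generators, and it commutes with $b,b'$ because the two legs commute. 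A central element annihilating a cyclic vector vanishes, since $\{a,a'\}\,w\ket\phi=w\,\{a,a'\}\ket\phi=0$ for every word $w$ and $\CSTAR(\twoplayer)\ket\phi$ is dense in $\overline{\CSTAR(\twoplayer)\ket\phi}$; hence $\{a,a'\}=0$ and $\{b,b'\}=0$ as operators. Thus $\ker\pi$ contains the ideal generated by the two anticommutators, so the GNS representation — and therefore $\phi$ itself — factors through the Pauli quotient $M(2)\tensor M(2)$ of Proposition~\ref{PAULI-ALGEBRA}.

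It remains to read off the table, where the word list $\{1,u,v,uv\}$ per player makes everything finite: each $\phi(w_A\tensor w_B)=\bra\phi\pi(w_A)\pi(w_B)\ket\phi$ is evaluated by pushing operators onto $\ket\phi$ through the four vector relations and simplifying with the now-operator identities $\{a,a'\}=\{b,b'\}=0$, which give $(b\pm b')^2=2$ and $(aa')^2=-1$. Two entries genuinely need the operator anticommutation rather than the vector relations alone (which there come out tautological): for the first moments, $a=-a'aa'$ combined with $(b-b')^2=2$ forces $\bra\phi a\ket\phi=-\bra\phi a\ket\phi=0$, and likewise for $a',b,b'$; for $\phi(uv\tensor1)$, comparing $\bra\phi aa'\ket\phi=\bra\phi bb'\ket\phi$ (from $aa'\ket\phi=bb'\ket\phi$) with $\bra\phi aa'\ket\phi=-\bra\phi bb'\ket\phi$ (from the bra relation) forces it to $0$. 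The remaining entries — the correlators $\pm\tfrac1{\sqrt2}$ with the predicted signs and $\phi(uv\tensor uv)=-1$ — then fall out directly, reproducing the stated table and pinning $\phi$ uniquely. The whole difficulty is concentrated in the centrality step; once the anticommutators are known to vanish as operators and not merely on $\ket\phi$, uniqueness is pure bookkeeping.
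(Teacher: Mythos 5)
Your proposal is correct, and it reaches the paper's two milestones --- the anticommutation relations $\{u,v\}\tensor1\ket\phi=0=1\tensor\{u,v\}\ket\phi$ and the explicit correlation table --- by a genuinely different route on both ends. For the first milestone the paper squares the bias twice via Cauchy--Schwarz, expands $[u,v]^*\tensor[u,v]^*\,[u,v]\tensor[u,v]$ into a sum of sixteen unitaries, and uses a counting/colinearity argument ($\phi(uvuv\tensor1)=-1$ forces $uv\tensor1\ket\phi=-vu\tensor1\ket\phi$); you instead exhibit the sum-of-squares certificate $2\sqrt2\cdot1-\CHSH=\tfrac1{\sqrt2}(X^2+Y^2)$ and its player-swapped twin, which yields the stronger \emph{cross-player} vector relations $u\tensor1\ket\phi=\tfrac1{\sqrt2}(1\tensor(u+v))\ket\phi$ etc.\ immediately, with the anticommutation on $\ket\phi$ falling out by substitution. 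The pivotal step is then identical in both proofs: the anticommutators are central, and a central element killing a cyclic vector vanishes in the GNS representation, giving the factorisation through $M(2)\tensor M(2)$ of Proposition \ref{PAULI-ALGEBRA}. For the endgame the paper untwists $\CHSH$ by the self-inverse automorphism $u\mapsto(u+v)/\sqrt2$, $v\mapsto(u-v)/\sqrt2$, derives the trace property, and invokes uniqueness of the tracial state on a full matrix algebra; you skip the automorphism and the trace property entirely and compute all sixteen moments by pushing operators onto $\ket\phi$ --- your cross-player relations are essentially the paper's post-untwisting relations $u\tensor1\ket\phi=1\tensor u\ket\phi$ in the original basis. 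What each buys: your certificate route is shorter and closer to standard self-testing technology, but requires guessing the SOS decomposition and ends in bookkeeping; the paper's Cauchy--Schwarz squaring needs no certificate and its trace-property conclusion is the structurally reusable statement (it is exactly the mechanism exploited for the synchronous Mermin--Peres games in the next section). Two small points of hygiene: both proofs take the attained value $\|\CHSH\|=2\sqrt2$ as known (your certificate only proves $\le$; attainment needs the standard qubit strategy, just as the paper's $\|\CHSH\|^2=8$ does), and your sign claim $\bra\phi uv\tensor1\ket\phi=-\bra\phi 1\tensor uv\ket\phi$ is correct only when the word is split across bra and ket, i.e.\ $\bra\phi\, u\tensor 1\cdot v\tensor1\,\ket\phi=\tfrac12\bra\phi\,1\tensor(u+v)(u-v)\,\ket\phi$; taking adjoints of the one-sided relation $uv\tensor1\ket\phi=1\tensor uv\ket\phi$ instead reproduces the same-sign identity and no contradiction, so a careful reader will want that half-sentence spelled out.
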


Before we begin with the proof,
let us note how our notion of uniqueness on states compares with the traditional uniqueness on correlations:

\begin{REM}
  The entire correlation table for the optimal state above entails all quantum commuting correlations,
  since these are given by the state space restricted on two-moments of the form
  \[
    \begin{array}{ccccccc}
      \phi(1\tensor 1), & \phi(1\tensor u), & \phi(1\tensor v), &
      \cancel{\phi(1\tensor uv)}, & \cancel{\phi(1\tensor vu)}, & \cancel{\phi(1\tensor uvu)} & \cancel{\ldots} \\
      \phi(u\tensor 1), & \phi(u\tensor u), & \phi(u\tensor v), &
      \cancel{\phi(u\tensor uv)}, & \cancel{\phi(u\tensor vu)}, & \cancel{\phi(u\tensor uvu)} & \cancel{\ldots} \\
      \phi(v\tensor 1), & \phi(v\tensor u), & \phi(v\tensor v), &
      \cancel{\phi(v\tensor uv)}, & \cancel{\phi(v\tensor vu)}, & \cancel{\phi(v\tensor uvu)} & \cancel{\ldots} \\
      \cancel{\phi(uv\tensor 1)}, & \cancel{\phi(uv\tensor u)}, & \cancel{\phi(uv\tensor v)}, & \cancel{\ldots} \\
    \end{array}
  \]
  These values are equivalently determined by the traditional correlation table
  \[
    p\Big(a,b=0,1\Big|x,y=0,1\Big)=\phi\Big(E(a|x)\tensor E(b|y)\Big)
  \]
  given by the corresponding spectral projections
  \[
    u = E(0|0) - E(1|0),\TAB v = E(0|1) - E(1|1)
  \]
  which reduces to the familiar values for the CHSH game
  \begin{align*}
    p\Big(a\oplus b=xy\Big)     = \frac{1+1/\sqrt{2}}{4}, \TAB
    p\Big(a\oplus b\neq xy\Big) = \frac{1-1/\sqrt{2}}{4}.
  \end{align*}
  In particular, the traditional uniqueness of correlations --- given for example by self-testing --- excludes all higher and mixed moments of order-2 unitaries.
\end{REM}

\begin{proof}
Our approach follows the original one by Landau from \cite{LANDAU-CHSH-1} and its sequel \cite{LANDAU-CHSH-2}, which we however extend to uniqueness of states using some additional arguments from operator algebras.
We begin with the following observation: Suppose a state is optimal for the bias polynomial, then it is necessarily also optimal for the squared bias
\[
  \phi(\CHSH)=\|\CHSH\|\implies \phi\left(\CHSH^2\right)=\|\CHSH\|^2
\]
which indeed follows by Cauchy--Schwarz:%
\footnote{One may either use the bound $\ket\phi\bra\phi\leq \|\phi\|^2$ together with $a\leq b\implies c^*ac\leq c^*bc$,\linebreak
or equivalently the general Cauchy--Schwarz for ucp-maps $\phi(a)^*\phi(a)\leq\phi(a^*a)$ as originally proven by Choi in \cite{CHOI-1974}.
For a neat proof of the latter see also \cite[proposition 3.2]{PAULSEN-BOOK}.}
\begin{align*}
  \|\CHSH\|^2 &= \bra\phi\CHSH^*\ket\phi\bra\phi\CHSH\ket\phi \\
              &\leq\bra\phi\CHSH^*\CHSH\ket\phi\leq\|\CHSH\|^2
\end{align*}
In our case the squared bias however reduces to
\[
  \CHSH^2 = 4 - [u,v]\tensor [u,v]:\TAB \|\CHSH\|^2=8
\]
and so an optimal state necessarily satisfies
\[
  \phi\left(\CHSH^2\right)=8\implies \phi\Big([u,v]\tensor[u,v]\Big)=-4.
\]
We note that this defines only a necessary condition.\\
Arguing once more by Cauchy--Schwarz we obtain as above
\[
  \phi\Big([u,v]^*\tensor [u,v]^*[u,v]\tensor[u,v]\Big)=16
\]
and this once more is only a necessary but not sufficient condition.\\
Noting that $u^2=1=v^2$, this expression reads written out
\begin{align*}
  16 &= \phi\Big(( 2-uvuv-vuvu)\tensor (2-uvuv-vuvu)\Big) \\
     &= \phi(2\tensor 2) -\phi(2\tensor uvuv) - \ldots + \phi(vuvu\tensor vuvu).
\end{align*}
Note that each summand involves a unitary:
\[
  1\tensor1,\TAB 1\tensor uvuv,\TAB \ldots,\TAB  vuvu\tensor vuvu.
\]
Thus by a simple counting argument this is only possible if the state achieves the value plus/minus one with correct sign on each of the unitaries.
In particular
\[
  \phi(uvuv\tensor1)  = -1 = \phi(1\tensor uvuv).
\]
Within the Stinespring dilation we may however cut such relations into
\[
  \braket{x|y}:=\bra\phi uv\tensor1| uv\tensor1\ket\phi =\phi(uvuv\tensor1)=-1
\]
with both pieces of norm-one
\begin{align*}
  \braket{x|x}=\bra\phi uv\tensor1|vu\tensor1\ket\phi = 1\\
  \braket{y|y}=\bra\phi vu\tensor1|uv\tensor1\ket\phi=1
\end{align*}
and similarly for the second tensor factor $\phi(1\tensor uvuv)$.\\
That is however only possible if these are colinear with
\[
  uv\tensor1\ket\phi=- vu\tensor1\ket\phi,\TAB 1\tensor uv\ket\phi=- 1\tensor vu\ket\phi
\]
and so we have the anticommutation
\[
  \{u,v\}\tensor 1\ket\phi= 0=1\tensor \{u,v\}\ket\phi.
\]
Finally note that these anticommutators lie in the center
\[
  \{u,v\}\tensor1,1\tensor\{u,v\}\in\CENTER\CSTAR(\twoplayer)
\]
as one easily verifies on generators:
\[
  u\{u,v\} = v + uvu = \{u,v\}u,\TAB v\{u,v\} = vuv+v = \{u,v\}v.
\]
The anticommutator therefore also vanishes on the entire subspace
\begin{gather*}
  H := \overline{\CSTAR(\twoplayer)\ket\phi} = \overline{\CSTAR(\Alice)\tensor\CSTAR(\Bob)\ket\phi}:\\[2\jot]
  \{u,v\}\tensor1\ \CSTAR(\Alice)\tensor\CSTAR(\Bob)\ket\phi = 0\\
  1\tensor\{u,v\}\ \CSTAR(\Alice)\tensor\CSTAR(\Bob)\ket\phi = 0.
\end{gather*}
that is within the minimal Stinespring dilation (see section \ref{STINESPRING})
\[
  \CSTAR(\twoplayer)\to B(H=\min):\TAB \{u,v\}\tensor1=0=1\tensor \{u,v\}.
\]
In other words, the state factors as desired via (see proposition \ref{PAULI-ALGEBRA})
\[
  M(2)\tensor M(2) = \CSTAR\big(u^2=1=v^2\big|\{u,v\}=0\big)\tensor\CSTAR\big(u^2=1=v^2\big|\{u,v\}=0\big)
\]
which gives us the diagram for the optimal state:
\begin{TIKZCD}[column sep=small]
  \CSTAR(\Alice)\tensor\CSTAR(\Bob) \drar[swap]{\BLANK\tensor\BLANK} \arrow[rrr] &[-0.5cm]&[0.5cm]&[1cm] \COMPLEX. \\
   & M(2)\tensor M(2)\rar & B(H=\min)\urar[swap]{\bra\phi\BLANK\ket\phi}
\end{TIKZCD}
This finishes the first part of the proposition and so we may freely restrict our attention from now on to the optimal state already acting on the quotient
\begin{TIKZCD}
  \CSTAR(\Alice)\tensor\CSTAR(\Bob)\rar &[-4\jot] \mathrm{M}(2)\tensor\mathrm{M}(2) \rar{\phi(\BLANK)} &[1cm] \COMPLEX
\end{TIKZCD}
and so also the bias polynomial for the CHSH game
\[
  \CHSH= u\tensor(u+v) + v\tensor(u-v)\in\mathrm{M}(2)\tensor\mathrm{M}(2).
\]
From here the rest of the theorem basically follows from traditional results on the CHSH game.
We however present a new more algebraic approach below.\linebreak
But at first we untwist the CHSH polynomial. For this we note that the combined expressions (in brackets above) are easily seen to anticommute
\[
  \{u+v,u-v\} = 1 - \{u,v\}+\{u,v\} - 1=0.
\]
Due to the anticommutation of our generators,
these define also unitaries of order-2 (up to normalization)
\[
  (u \pm v)^2= u^2 \pm (uv + vu) + v^2 = 1 \pm \{u,v\} +1 = 2.
\]
In other words, these define just another abstract pair of Pauli matrices and so they give rise to the automorphism (see proposition \ref{PAULI-ALGEBRA})
\[
  M(2)\to M(2):\TAB u\mapsto \frac{u+v}{\sqrt2},\TAB v\mapsto\frac{u-v}{\sqrt2}
\]
which is at the same time its own inverse.
Using this automorphism we obtain the untwisted bias polynomial (basically in the new basis)
\[
  \CHSH/\sqrt2 = u\tensor u + v\tensor v.
\]
From here we give a new algebraic approach by verifying the trace property from which we deduce the final uniqueness,
and from which we may moreover easily read off the values for the correlation table:
Suppose our state maximises the untwisted bias polynomial and so also the unitary summands
\[
  \phi(u\tensor u+v\tensor v)=2\implies \phi(u\tensor u)=1=\phi(v\tensor v).
\]
From this it follows within the Stinespring dilation (similar as above):
\begin{align*}
  u\tensor 1\ket\phi = 1\tensor u\ket\phi,\TAB
  v\tensor 1\ket\phi = 1\tensor v\ket\phi.
\end{align*}
Indeed one easily verifies (and similarly for the other generator)
\[
  \bra\phi(u\tensor1-1\tensor u)^*(u\tensor1-1\tensor u)\ket\phi=\bra\phi2(1\tensor1-u\tensor u)\ket\phi=0.
\]
The entire trace property follows now by iteration on words of generators
\begin{align*}
  (u_1\ldots u_n u)\tensor1\ket\phi &= (u_1\ldots u_n)\tensor u\ket\phi= \ldots = 1\tensor (u u_n\ldots u_1)\ket\phi
\end{align*}
and we note that the order gets flipped (which is a cause for the transpose).\linebreak
Full matrix algebras however carry a unique tracial state and so we obtain as desired the unique optimal state for the CHSH game:
\[
  \phi\Big(u_1\ldots u_m\tensor v_1\ldots v_n\Big) = \tau\Big(u_1\ldots u_mv_n\ldots v_1\Big).
\]
The particular values for the correlation table may now be easily derived from this relation: Indeed we obtain for example
\[
  \phi(u\tensor v)=\tau\Big(u\cdot\frac{u+v}{\sqrt2}\Big)=\frac{1}{\sqrt2}\tau(1) + \frac{1}{\sqrt2}\tau(uv) = \frac{1}{\sqrt2}.
\]
This concludes the proof whence the optimal state is unique.
\end{proof}



\section[Mermin--Peres games]{Mermin--Peres games: single perfect state}
We continue in this section with another example of nonlocal games:\linebreak the Mermin--Peres magic square and magic pentagram games.
Both admit a unique winning state for rather obvious reasons (see the theorem below).\linebreak
In order to understand this uniqueness argument however, we will need a more thorough introduction first to linear constraint system games and their game algebra (and we note that the following arguments also apply to nonlinear constraint system games with any complex variables).

A linear constraint system game is a nonlocal game given by a system of linear equations (which we formulate in multiplicative form) such as
\begin{gather*}
  u,v,\ldots\in\exp\{2\pi i~\INTEGERS/2\}:\TAB uvw=1,\TAB vz=-1,\TAB\ldots
\end{gather*}
The players are asked an equation each and so
\[
  X=\left\{uvw=1,vz=-1,\ldots\right\}=Y
\]
and each player is asked to fill out the variables of their equation say
\[
  x=\left(vz=-1\right)\TAB\implies\TAB a=\left(v=-1,z=1\right).
\]
As a first rule the players are asked to answer in a synchronous way:
\[
  x=\mathrm{some~equation}=y\implies a=b.
\]
As such a winning state (being synchronous) defines a tracial state on the algebra for a single player (see \cite[corollary 5.6]{PSSTW-2016}).
More precisely, consider the two-player algebra (from section \ref{TWOPLAYER}) for identical algebras per player
\begin{gather*}
  \CSTAR(\twoplayer)=\CSTAR(\player)\tensor\CSTAR(\player)
\end{gather*}
and consider a winning state
\[
  \phi:\CSTAR(\player)\tensor\CSTAR(\player)\to\COMPLEX:\TAB\phi\Big(E(\losing)\Big)=0.
\]
Then the synchronicity rule enforces on the Stinespring dilation:
\[
  \phi\Big(E(a\neq b|x=y)\Big)=0\implies E(a|x)\tensor1\ket\phi =
  1\tensor E(a|x)\ket\phi.
\]
By iteration such states enjoy the special property (note the flip in order)
\begin{gather*}
  \phi\Big(E(a_1|x_1)\ldots E(a_m|x_m)\tensor E(b_1|y_1)\ldots E(b_n|y_n)\Big) \\
  = \phi\Big(E(a_1|x_1)\ldots E(a_m|x_m)E(b_n|y_n)\ldots E(b_1|y_1)\tensor1\Big)
\end{gather*}
and similar for flipping words in other ways from left to right.\\
Thus such states define a tracial state on either algebra per player
\begin{align*}
  \phi(\BLANK\tensor1)&:\CSTAR(\player)\tensor1\to\COMPLEX\\
  \phi(1\tensor\BLANK)&:1\tensor\CSTAR(\player)\to\COMPLEX
\end{align*}
which both allow us to recover the original state from above relation.\\
So we may from now on restrict our attention to a tracial state
\[
  \tau:\CSTAR(\player)\to\COMPLEX:\TAB\tau(ab)=\tau(ba).
\]
For tracial states however every additional rule enforces the corresponding \mbox{2-moments} of losing pairs to vanish in the minimal Stinespring dilation (see \cite{ORTIZ-PAULSEN-GRAPH} and \cite{HELTON-MEYER-PAULSEN-SATRIANO}). As such we are left to determine the game algebra for our synchronous game
\begin{gather}\label{GAME-ALGEBRA}
  \CSTAR(\mathrm{sync~game}) := \CSTAR\Big(\player\Big|E(\losing)=0\Big)
\end{gather}
where we denote for shorthand (similar as in appendix \ref{XOR-GAMES})
\[
  E(ab|xy):= E(a|x)E(b|y):\TAB E(S)=\{E(s)|s\in S\}
\]
and any tracial state on the quotient defines a winning correlation:
\begin{gather}\label{SYNC=TRACES}
  \tau:\CSTAR(\mathrm{sync~game})\to\COMPLEX:\TAB p(ab|xy)=\tau\Big(E(a|x)E(b|y)\Big).
\end{gather}
With this at hand we now get to the first particular rule for linear constraint system games.
For this we first note that for each equation as input we obtain a projection valued measure whence commuting projections for each variable. For the equation $x=(uvw=1)$ we obtain for example the triple
\[
  E(u\in\mathquote{\INTEGERS/2}|x),\TAB
  E(v\in\mathquote{\INTEGERS/2}|x),\TAB
  E(w\in\mathquote{\INTEGERS/2}|x)
\]
which we retrieve from the projection valued measure as a disjoint union like
\[
  E(v=-1|x):=\sum E\Big(u\in\mathquote{\INTEGERS/2},v=-1,w\in\mathquote{\INTEGERS/2}\Big|x\Big)
\]
and we recover the projection valued measure simply as intersection
\[
  E(u,v,w|x) = E(u|x) E(v|x) E(w|x).
\]
This is all easiest remembered as a Venn diagram (for commuting projections):
\begin{center}
  \begin{tikzpicture}
    \draw (-1,1) ellipse [x radius=2.5, y radius=2,rotate=0];
    \draw (3/2,2) ellipse [x radius=2.5, y radius=2,rotate=20];
    \draw (3/2,0) ellipse [x radius=2.5, y radius=2,rotate=-30];
    \draw (-5,-3) rectangle (6,5);
    \node at (-2,1) {$u=1$};
    \node at (-3,3.5) {$u=-1$};
    \node at (2,3) {$v=1$};
    \node at (4.5,4) {$v=-1$};
    \node at (2,-1) {$w=1$};
    \node at (4.5,-2) {$w=-1$};
    \node at (-2,-2.3) {$E(u=1,v=-1,w=1)$};
    \draw[->] (-2,-2) to[bend right=30] (-0.2,-0.5);
  \end{tikzpicture}
\end{center}
The first particular rule for linear constraint system games asks now the players to answer consistently for shared variables, for example
\[
  x=(uvw=1),\TAB y=(vz=-1)\TAB\implies\TAB\mathquote{a(v)=b(v)}
\]
which boils down to the equality for our projections (inside the game algebra).
\[
  E\Big(v\Big|x=(uvw=1)\Big) = E\Big(v\Big|y=(vz=-1)\Big).
\]
Indeed this follows very quickly from (think in terms of Venn-diagrams!)
\[
  E(v=1|x)E(v=-1|y) =0= E(v=-1|x)E(v=1|y).
\]
That is the projections are all independent of the chosen equation (cf.~\cite{CLEVE-MITTAL}).\\
With this at hand we may now get to the second particular rule for linear constraint system games.
For this we first pass to their generating unitaries
\[
  U:= E(u=1)-E(u=-1),\TAB V:= E(v=1)-E(v=-1),\TAB\ldots
\]
We now ask these unitaries to satisfy each equation, for example
\[
  f(u,v,w)= uvw = 1\TAB\implies\TAB f(U,V,W)=UVW=1.
\]
Note that the unitaries appearing in a common equation arose as commuting projections. As such they (isomorphically) generate the abelian subalgebra
\[
  \CONTINUOUS(\spectrum U)\tensor\CONTINUOUS(\spectrum V)\tensor\CONTINUOUS(\spectrum W) \subset\CSTAR(\player)
\]
and we may have instead also asked for any continuous function such as
\[
  g(U,V,W)=\cos(UV^3) + W^2-1=0.
\]
The generated subalgebra however agrees with the isomorphic copy
\[
  \CSTAR(g)=\CONTINUOUS(\supp g)E(\supp g)\subset\CSTAR(\player)
\]
and so modding out (the ideal generated) by either side results in the same quotient,
that is the requirement above reduces to its support projection:
\[
  g(U,V,W)=0\TAB\iff\TAB E(\supp g)=0.
\]
That is we obtain for our example (read backwards)
\[
  E(u,v,w)=E(u)E(v)E(w)=0,\TAB \forall uvw\neq1 \TAB\iff\TAB UVW=1.
\]
This is nothing but our second particular rule for linear constraint system games:
The players need to respond with variables satisfying their equation like
\[
  x = (uvw=1)\TAB\implies\TAB a=(u,v,w=\pm1):\TAB a(u)a(v)a(w)=1.
\]
Thus the game algebra \eqref{GAME-ALGEBRA} for linear constraint systems reads (see also \cite{CLEVE-MITTAL}):
\begin{align*}
  &\CSTAR\Big(\mathquote{uvw=1,vz=-1,\ldots}\Big)\\
  &=\CSTAR\Big(U^2=1,V^2=1,\ldots\Big| UVW=1: [U,V]=0,\ldots; VZ=-1:\ldots\Big).
\end{align*}
While allowing also for continuous functions, note that there was also nothing special about unitaries of \ordertwo. Instead all the above also holds perfectly fine for variables with values in any (possibly different) list of complex numbers
\[
  u\in\{\alpha_1,\ldots,\alpha_u\},\TAB v\in\{\beta_1,\ldots,\beta_v\},\TAB\ldots
\]
upon replacing the generators with
\[
  U=\alpha_1 E(u=\alpha_1) + \ldots + \alpha_u E(u=\alpha_u).
\]
This concludes the introduction to linear constraint system games.\\
With this at hand one may now very easily settle uniqueness of winning states:

\begin{THM}
    The game algebra \eqref{GAME-ALGEBRA} for the Mermin--Peres magic square and magic pentagram game read respectively
    \begin{gather*}
      \CSTAR(\magicsquare) = \mathrm{M}_2\tensor\mathrm{M}_2\\
      \CSTAR(\magicpenta) = \mathrm{M}_2\tensor\mathrm{M}_2\tensor\mathrm{M}_2
    \end{gather*}
    and as such both admit one and only one winning state given by the unique tracial state on either matrix algebra:
    \begin{gather*}
      T(\mathrm{M}_2\tensor\mathrm{M}_2)=\{\trace/2\tensor\trace/2\}\\
      T(\mathrm{M}_2\tensor\mathrm{M}_2\tensor\mathrm{M}_2)=\{\trace/2\tensor\trace/2\tensor\trace/2\}
    \end{gather*}
    As for the CHSH game, this entails all quantum commuting strategies.
\end{THM}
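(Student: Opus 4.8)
The plan is to feed each game into the linear constraint system machinery developed above: read off the explicit presentation of the game algebra \eqref{GAME-ALGEBRA}, use the product relations to eliminate redundant generators, observe that what survives is a collection of order-2 unitaries obeying exactly a system of commutation and anticommutation relations, and finally identify the resulting universal algebra with the claimed full matrix algebra by invoking the Pauli description (proposition \ref{PAULI-ALGEBRA}) one tensor factor at a time. Uniqueness of the winning state then comes essentially for free: by \eqref{SYNC=TRACES} winning states are precisely the tracial states on the game algebra, and a full matrix algebra $\mathrm{M}_{2^k}=\mathrm{M}_2^{\otimes k}$ carries exactly one tracial state, namely $\trace/2^{\,k}$.

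For the magic square I would write out the presentation with nine order-2 unitaries $a_{ij}$ arranged in a $3\times3$ grid, the commutation relations placing the three entries of each row and each column in a common abelian equation, and the six product relations --- three rows and three columns --- with signs chosen so that five products equal $1$ and one equals $-1$. Each product relation lets me solve for the third entry of its line, so the four corner generators $a_{11},a_{12},a_{21},a_{22}$ already generate the whole algebra. The decisive computation is to extract from the line relations their pairwise commutation behaviour: the mismatched overall sign (the familiar parity obstruction, which would force $1=-1$ were the algebra commutative) forces genuine anticommutation, and a short calculation pins it down to two anticommuting pairs, say $\{a_{11},a_{22}\}$ and $\{a_{12},a_{21}\}$, which moreover commute with one another. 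By proposition \ref{PAULI-ALGEBRA} each pair generates a copy of $\mathrm{M}_2$, and since the two pairs commute these assemble into $\mathrm{M}_2\tensor\mathrm{M}_2$.

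The magic pentagram is entirely parallel: ten order-2 unitaries, one per vertex, five line relations (each line an abelian equation on its four vertices) with four products equal to $1$ and one equal to $-1$. Eliminating one generator per line and running the same sign-versus-anticommutation analysis isolates three mutually commuting anticommuting pairs, whence $\mathrm{M}_2\tensor\mathrm{M}_2\tensor\mathrm{M}_2$. In both cases the concluding step is to combine the tracial-state correspondence \eqref{SYNC=TRACES} with the uniqueness of the normalized trace on a full matrix algebra; and, exactly as in the remark following the CHSH theorem, the full correlation table --- hence every quantum commuting strategy --- is recovered from the two-moments $\tau(E(a|x)E(b|y))$.

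The main obstacle is not deriving the Pauli relations but proving that the universal game algebra is no larger than the stated matrix algebra, i.e.\ that the Pauli representation is faithful and not merely a quotient. As for the CHSH game --- where the words $\words(u,v)$ collapsed to the four-element basis $\{1,u,v,uv\}$ --- I would settle this by a completeness and dimension count: once the commutation and anticommutation relations are in force, every word in the generators reduces to one of the $4^{k}$ signed monomials spanning $\mathrm{M}_2^{\otimes k}$, so the universal algebra has dimension at most $4^{k}$; since the explicit Pauli strategy realises it faithfully on $2^{k}$-dimensional space, the quotient onto $\mathrm{M}_2^{\otimes k}$ must be an isomorphism. Verifying that the reduction rules really do close up on this finite basis --- rather than collapsing further or rendering the product relations inconsistent --- is the one place where the argument needs genuine care.
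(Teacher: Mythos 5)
Your proposal is correct and takes essentially the same route as the paper: winning states reduce to tracial states on the game algebra via the synchronicity discussion and \eqref{SYNC=TRACES}, the line relations eliminate all but four generators, the sign mismatch between the two expressions for the remaining entry forces the anticommutation relations, proposition \ref{PAULI-ALGEBRA} identifies the result as $\mathrm{M}_2\otimes\mathrm{M}_2$ (resp.\ $\mathrm{M}_2^{\otimes 3}$), and uniqueness of the normalized trace on a full matrix algebra finishes the argument. The only differences are cosmetic: the paper generates from the four corners of the $3\times3$ grid (packaging the cross-line commutations directly as a tensor product of two abstract order-$2$ pairs) rather than your top-left $2\times2$ block, and the faithfulness step you carefully spell out via a dimension count is left implicit in the paper, where it is automatic since $\mathrm{M}_{2^k}$ is simple and the standard Pauli strategy witnesses that the game algebra is nonzero.
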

\begin{proof}
    The algebra for the Mermin--Peres magic square is generated by
    \begin{TIKZCD}
      u\tensor1\rar[thick,dash]\dar[thick,dash] &
      {\boxed{\text{North}}} \rar[thick,dash]\dar[thick,dashed, no head] &
      1\tensor u\dar[thick,dash] \\[2\jot]
      {\boxed{\text{West}}} \rar[thick,dash]\dar[thick,dash] &
      {\boxed{\text{Center}}} \rar[thick,dash]\dar[thick,dashed,no head] &
      {\boxed{\text{East}}} \dar[thick,dash] \\[2\jot]
      1\tensor v\rar[thick,dash] &
      {\boxed{\text{South}}} \rar[thick,dash] & v\tensor1
    \end{TIKZCD}
    where we have commuting rows and columns automatically built in as tensor product while using abstract copies of order two unitaries:
    \[
      u^2=1=v^2:\TAB u\tensor1,\TAB v\tensor1,\TAB 1\tensor u,\TAB 1\tensor v.
    \]
    Also all the boxes on outer outer edges are already determined:
    \[
      \boxed{\text{North}} = u\tensor u,\TAB \boxed{\text{West}} = u\tensor v,\TAB \boxed{\text{East}}= v\tensor u,\TAB \boxed{\text{South}} = v\tensor v.
    \]
    Using the center box one now easily derives their anticommutation:
    \begin{align*}
      \boxed{\text{West}}\cdot\boxed{\text{East}} = \boxed{\text{Center}} = -\boxed{\text{North}}\cdot\boxed{\text{South}} &\implies 1\tensor\{u,v\}=0,\\[2\jot]
      \boxed{\text{West}}\cdot\boxed{\text{East}} = \boxed{\text{Center}} = -\boxed{\text{South}}\cdot\boxed{\text{North}} &\implies \{u,v\}\tensor1=0.
    \end{align*}
    But the universal algebra generated by a pair of anticommuting \ordertwo\ unitaries is nothing but the abstract Pauli algebra from proposition \ref{PAULI-ALGEBRA} and so we found
    \[
      \CSTAR(\magicsquare) = \mathrm{M}_2\tensor\mathrm{M}_2.
    \]
    Similarly one argues for the magic pentagram and the proof is complete.
\end{proof}


\section*{Acknowledgements}

The author would like to thank his supervisor S{\o}ren Eilers for his kind support and encouragements,
as well as Connor Paddock for suggesting the quite valuable survey article \cite{SUPIC-BOWLES-SELF-TESTING}.
Moreover, the author acknowledges the support under the Marie–Curie Doctoral~Fellowship~No.~801199.

\appendix
\section{Pauli algebra}

\begin{PROP}\label{PAULI-ALGEBRA}
  The algebra of $2x2$-matrices (equivalently the algebra generated by its system of matrix units) has the equivalent description as the universal algebra generated by anticommuting \ordertwo\ unitaries
  \[
    M(2)=\CSTAR\Big(u^2=1=v^2\Big|\{u,v\}=0\Big)
  \]
  which identifies the generating order-2 unitaries as Pauli matrices
  \[
    u=\SMATRIX{1\\&-1}=Z,\TAB v=\SMATRIX{&1\\1}=X
  \]
  or any other (possibly rotated) pair of anticommuting Pauli matrices.\\
  We thus suggestively denote this as \textbf{the abstract algebra of Pauli matrices.}
\end{PROP}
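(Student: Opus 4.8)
The plan is to write down an explicit $*$-homomorphism from the universal algebra onto $M(2)$ and then upgrade it to an isomorphism by a four-dimensional count. Write $A:=\CSTAR(u^2=1=v^2\mid\{u,v\}=0)$ for the universal algebra; it exists since \ordertwo\ unitaries are automatically of norm one, so the relations are admissible and the universal norm is finite. The Pauli matrices $Z=\SMATRIX{1\\&-1}$ and $X=\SMATRIX{&1\\1}$ are self-adjoint, square to the identity, and anticommute, as
\[
  ZX=\SMATRIX{&1\\-1}=-\SMATRIX{&-1\\1}=-XZ,
\]
so by the universal property there is a $*$-homomorphism $\pi:A\to M(2)$ with $\pi(u)=Z$ and $\pi(v)=X$.

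Next I would bound $\dim A$ from above by reducing words. Using only $u^2=1=v^2$ and the anticommutation $vu=-uv$, every word in $u,v$ collapses up to sign onto one of $1,u,v,uv$ --- for instance $uvu=-u^2v=-v$ and $uvuv=-u^2v^2=-1$. Hence the $*$-subalgebra generated by $u,v$ is already the linear span of $\{1,u,v,uv\}$; being finite-dimensional it is closed, so $A=\SPAN\{1,u,v,uv\}$ and $\dim A\leq4$.

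It then remains to match dimensions. The images $\pi(1)=1$, $\pi(u)=Z$, $\pi(v)=X$, $\pi(uv)=ZX$ are linearly independent and hence a basis of $M(2)$; this simultaneously shows that $\pi$ is surjective and that $1,u,v,uv$ are linearly independent in $A$, whence $\dim A=4=\dim M(2)$. A surjection between spaces of equal finite dimension is bijective, so $\pi$ is the desired $*$-isomorphism. The proposition is elementary, and the only point needing a little care is the word reduction: one must check that anticommutation genuinely cuts every word down to $\{1,u,v,uv\}$ and produces no further independent element --- the representation $\pi$ then supplies the matching lower bound on the dimension.

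Finally, for the clause about a \enquote{possibly rotated} pair, I would observe that the identification is not canonical: any two anticommuting self-adjoint unitaries in $M(2)$ again give, together with their product and the identity, a basis, and any two such pairs are intertwined by a unitary conjugation, i.e.\ by an inner automorphism of the Pauli algebra. A concrete such automorphism is the rotation $u\mapsto(u+v)/\sqrt2$, $v\mapsto(u-v)/\sqrt2$ already used for the CHSH game, which sends one anticommuting pair of \ordertwo\ unitaries to another while fixing $A$.
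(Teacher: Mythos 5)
Your proposal is correct, but it takes a genuinely different route from the paper's own proof. The paper never counts dimensions: it works entirely at the level of presentations, observing that the matrix units of $M(2)$ are equivalently generated by the partial isometry $w=\SMATRIX{&0\\1}$ with orthogonal range and source projections, and then exhibiting an explicit dictionary between that generating set and the anticommuting pair --- $2w:=v(1+u)$ in one direction, $u:=w^*w-ww^*$ and $v:=w+w^*$ in the other --- leaving the reader to check that each family of relations implies the other. This implicitly leans on the standard fact that $M(2)$ is universal for its matrix-unit relations, and in exchange yields concrete translation formulas showing the two universal objects coincide. Your argument is instead the classical representation-plus-dimension-count: the universal property supplies a surjective $*$-homomorphism $\pi:A\to M(2)$ with $\pi(u)=Z$, $\pi(v)=X$; word reduction via $u^2=1=v^2$ and $vu=-uv$ gives $A=\SPAN\{1,u,v,uv\}$, hence $\dim A\leq 4$ (and rightly noting a finite-dimensional subspace is closed, so no completion issues arise); linear independence of $1,Z,X,ZX$ then forces $\dim A=4$ and $\pi$ bijective. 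Your version is more self-contained and has the side benefit of making explicit the linear basis $\{1,u,v,uv\}$ that the CHSH theorem later invokes when it reduces $\words(u,v)$ to four elements, whereas the paper's version avoids dimension counting altogether and is closer in spirit to an equivalence of presentations. Your closing remark is also a genuine supplement: the paper merely asserts that any rotated anticommuting pair works, while you correctly indicate why --- any two anticommuting pairs of self-adjoint unitaries in $M(2)$ are unitarily conjugate, and the rotation $u\mapsto(u+v)/\sqrt2$, $v\mapsto(u-v)/\sqrt2$ used in the CHSH proof is an instance.
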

\begin{proof}
  Note that the system of matrix units is equivalently generated by the partial isometry with the orthogonality relation for its range and source:
  \[
    w=\SMATRIX{&0\\1}:\TAB\SMATRIX{1\\&0}=\SMATRIX{&1\\0}\SMATRIX{&0\\1}\perp\SMATRIX{&0\\1}\SMATRIX{&0\\1}=\SMATRIX{0\\&1}.
  \]
  The algebra generated by such a partial isometry is now easily seen to be equivalent to the algebra generated by anticommuting \ordertwo\ unitaries.
  Indeed with the suggested identification from the proposition one has
  \[
    2w:=v(1+u)=\SMATRIX{&1\\1}\left(\SMATRIX{1\\&1}+\SMATRIX{1\\&-1}\right)=\SMATRIX{&0\\2}
  \]
  with inverse translation given by
  \begin{gather*}
    u:=w^*w-ww^*=\SMATRIX{&1\\0}\SMATRIX{&0\\1}-\SMATRIX{&0\\1}\SMATRIX{&1\\0}=\SMATRIX{1\\&-1}\\
    v:=w+w^*=\SMATRIX{&0\\1}+\SMATRIX{&1\\0}=\SMATRIX{&1\\1}
  \end{gather*}
  and one easily verifies the orthogonality relation for the range and source,\\
  as well as the anticommutation relation, respectively.
\end{proof}

\section[XOR-games]{Overview on XOR-games}
\label{XOR-GAMES}
We provide an overview of XOR-games for convenience of the reader, and we do so from an abstract, operator algebraic perspective.
Meanwhile, the author would like to acknowledge that the considerations of the current section arose as joint work with Azin Shahiri from a forthcoming work on the systematic (operator algebraic and representation-theoretic) classification of optimal states for the tilted CHSH games from \cite{LAWSON-LINDEN-POPESCU-TILTED-CHSH} and further \cite{ACIN-MASSAR-PIRONIO} with longterm goal for a better understanding of the I3322-inequality from \cite{FROISSART-I3322} and \cite{COLLINS-GISIN}.

An XOR-game is a nonlocal game in which the players answer with bits and such that the winning condition depends only on the parity of their answers:
\[
  A=\{0,1\}=B:\TAB\win=\{(a\oplus b|xy)|\ldots\}\subset \{0,1\}\times X\times Y.
\]
The most prominent example is of course the CHSH game as discussed in the current article.
To give another less obvious example however one may also consider the graph 2-coloring game, whose winning condition may be equivalently formulated as
\begin{gather*}
  x\sim y\implies a\oplus b = 1,\\
  x = y \implies a\oplus b = 0.
\end{gather*}
For convenience we consider bits from now on as $\pm1$-valued:
\[
  A=\{1,-1\}=B:\TAB a\oplus b=0,1\TAB\rightsquigarrow\TAB ab=\pm1.
\]
The point of such games is that they allow to verify the winning probability based on the bias instead, at which we take a closer look now.
Recall for this the two-player algebra with generators (see section \ref{TWOPLAYER}):
\begin{gather*}
  \CSTAR(\twoplayer)=\CSTAR(X|A=2)\tensor\CSTAR(Y|B=2):\\[\jot]
  u(x)= E(a=0|x)- E(a=1|x).
\end{gather*}
Now for nonlocal games in general the winning probability may be described by the operator norm for the game polynomial (which we take unnormalized)
\[
  \game := \sum E(\win):\TAB\winningprob_{qc}(\game)=\frac{\|\game\|}{|X\times Y|}
\]
and where we used the shorthand notation
\[
  E(ab|xy):= E(a|x)\tensor E(b|y):\TAB E(S)=\{E(s)|s\in S\}.
\]
For XOR-games we may now introduce the bias polynomial (also unnormalized)
\[
  \bias := \sum_{xy}\beta(x,y)u(x)\tensor u(y)
\]
with bias function given by the sum over the possible winning parities
\[
  \beta(x,y):=\sum\Big\{\parity=\pm1\Big|(\parity|xy)\in\win\Big\}=1,0,-1.
\]
For the 2-coloring example above this reads for example
\[
  \beta(x\sim y)=-1,\TAB\beta(x=y)=1,\TAB\beta(x\nsim y, x\neq y)=0.
\]
The point is now that the bias polynomial satisfies
\[
  2\game = |\win| + \bias.
\]
Recall that we take here the winning set as $\win\subset \{0,1\}\times X\times Y$.\\
Since the game polynomial is positive we obtain for the spectrum
\[
  0\leq\sigma(\game)\leq\|\game\|:\TAB \max\sigma(\game)=\|\game\|
\]
and by the relation above
\begin{gather*}
  2\max\spectrum(\game)=\max\spectrum(\bias) + |\win|.
\end{gather*}
We may therefore relate the winning probability to the maximal upper bound for the bias.
We remark that the above procedure --- that is passing from the norm to the spectrum --- may be seen as a linearization process since
\[
  \|a+1\|\neq\|a\|+1\TAB\rightsquigarrow\TAB\spectrum(a+1)=\spectrum(a)+1,\TAB \spectrum(\lambda a)=\lambda\spectrum(a).
\]
The goal is now to undo the linearization process --- that is passing from the spectrum for the bias polynomial back to its norm.
We achieve this with the following observation:

\begin{PROP}\label{SYMMETRIC-SPECTRUM}
  Bias polynomials for XOR-games have symmetric spectrum,
  from which the maximal bias agrees with the operator norm:
  \[
    \spectrum(\bias)=-\spectrum(\bias):\TAB \max\spectrum(\bias)=\|\bias\|.
  \]
  As a consequence for any XOR-game
  \[
    2\|\game\|=|\win|+\|\bias\|
  \]
  whence the winning probability relates to the problem (see \cite{FRITZ-NETZER-THOM}):\\
  \enquote{Can you compute the operator norm} for the bias polynomial?
\end{PROP}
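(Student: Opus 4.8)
The plan is to exhibit a $*$-automorphism of the two-player algebra that negates the bias polynomial; since $*$-automorphisms preserve spectra, this immediately forces $\spectrum(\bias)=-\spectrum(\bias)$, and the rest is bookkeeping.

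First I would construct the sign-flipping automorphism on a single player. Recall from section \ref{TWOPLAYER} that $\CSTAR(\Alice)=\CSTAR(X|A=2)$ is a free product of copies of $\CSTAR(\INTEGERS/2)$, with the generators $u(x)$ as its canonical \ordertwo\ unitaries. Since each $-u(x)$ is again a self-adjoint unitary with $(-u(x))^2=1$, the universal property of the free product yields a unique $*$-homomorphism
\[
  \alpha:\CSTAR(\Alice)\to\CSTAR(\Alice):\TAB u(x)\mapsto -u(x)
\]
(equivalently, $\alpha$ is the automorphism induced by the character sending every generator of $\INTEGERS/2*\ldots*\INTEGERS/2$ to $-1$), and $\alpha$ is its own inverse, hence a $*$-automorphism.

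Next I would transport this to the whole algebra. As the maximal tensor product is functorial, the map $\alpha\tensor\id$ is a $*$-automorphism of $\CSTAR(\twoplayer)=\CSTAR(\Alice)\tensor\CSTAR(\Bob)$, and flipping the sign on exactly one tensor leg negates every summand:
\[
  (\alpha\tensor\id)(\bias)=\sum_{xy}\beta(x,y)\,(-u(x))\tensor u(y)=-\bias.
\]
A $*$-automorphism preserves the spectrum, so $\spectrum(\bias)=\spectrum\big((\alpha\tensor\id)(\bias)\big)=\spectrum(-\bias)=-\spectrum(\bias)$, which is precisely the claimed symmetry. Because the coefficients $\beta(x,y)$ are real and each $u(x)\tensor u(y)$ is self-adjoint, $\bias$ is self-adjoint, so its norm equals its spectral radius; combined with the symmetry this gives $\max\spectrum(\bias)=\|\bias\|$. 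The consequence then follows from the relation $2\game=|\win|+\bias$ established above together with the positivity $\game\geq0$ (whence $\max\spectrum(\game)=\|\game\|$): taking the maximum of the spectrum on both sides yields $2\|\game\|=|\win|+\|\bias\|$.

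The only delicate point — and the step I would be most careful about — is verifying that the naive sign flip genuinely defines an honest automorphism of the universal algebra (and then of the maximal-tensor algebra) rather than a mere linear bijection; this is exactly what the universal property of the free product and the functoriality of the maximal tensor product supply. It is equally essential to flip the sign on only one of the two legs, since flipping both would return $+\bias$ and prove nothing.
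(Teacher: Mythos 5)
Your proposal is correct and follows essentially the same route as the paper: the paper likewise uses the automorphism flipping the sign of Alice's generators (leaving Bob's untouched) to send $\bias\mapsto-\bias$, then invokes invariance of the spectrum under automorphisms. Your additional care in justifying the automorphism via the universal property of the free product and the functoriality of the maximal tensor product, and in noting self-adjointness of $\bias$ so that norm equals spectral radius, merely makes explicit what the paper leaves implicit.
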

\begin{proof}
  The automorphism which swaps the measurement outcomes for Alice and leaves the measurements for Bob untouched (or vice versa)
  \begin{gather*}
    \Phi:\CSTAR(X|2)\tensor\CSTAR(Y|2)\to\CSTAR(X|2)\tensor\CSTAR(Y|2):\\
    u(x)\tensor1\mapsto - u(x)\tensor1,
    \TAB
    1\tensor u(y)\mapsto 1\tensor u(y)
  \end{gather*}
  sends any bias polynomial to its negative
  \[
    \Phi(\bias) = -\sum \beta(xy)u(x)\tensor u(y) = -\bias.
  \]
  The spectrum however is invariant under automorphism whence
  \[
    \spectrum(\bias)=\spectrum(\Phi\bias)=\spectrum(-\bias)=-\spectrum(\bias).
  \]
  The remaining statements follow now from the symmetic spectrum.
\end{proof}
\begin{REM}
  One needs to stay cautious since polynomials (which do not strictly arise as bias polynomial) generally fail to have symmetric spectrum:\linebreak
  for example the I3322 inequality from \cite{FROISSART-I3322} and as reinvented in \cite{COLLINS-GISIN} might even have distinct values for the minimal and maximal bias.
\end{REM}

We finish this overview with an interesting negative result from \cite{CLEVE-HOYER-TONER-WATROUS},
and for which we note that it easily extends to include also quantum commuting strategies simply by using a Stinespring dilation as discussed in section \ref{STINESPRING}:

\begin{THM}[{\cite[theorem 3]{CLEVE-HOYER-TONER-WATROUS}}]
  Nonlocal games with binary outputs are never pseudo--telepathy games --- neither for quantum commuting strategies.
\end{THM}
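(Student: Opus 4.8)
The plan is to unwind the definition: a pseudo--telepathy game is one admitting a perfect quantum strategy but \emph{no} perfect classical (deterministic) strategy, so the assertion amounts to showing that for binary outputs a perfect quantum commuting strategy already forces a perfect classical one. First I would realise the given strategy concretely. A winning state $\phi$ on $\CSTAR(\twoplayer)=\CSTAR(X|2)\tensor\CSTAR(Y|2)$ satisfies $\phi(E(\losing))=0$, and a Stinespring dilation (section~\ref{STINESPRING}) turns it into two commuting families of order--$2$ unitaries $A_x=\alpha(u(x))$ and $B_y=\beta(u(y))$ on the common minimal space $H=\overline{\CSTAR(\twoplayer)\ket\phi}$, together with its cyclic vector $\ket\phi$. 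Since each losing probability is the expectation of a positive operator, its vanishing upgrades to an annihilation relation $E(a|x)\tensor E(b|y)\ket\phi=0$ for every losing pair $(a,b)$ at $(x,y)$ --- exactly the mechanism already exploited in the squared--bias step of the CHSH proof above.

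The decisive feature of binary outputs is that for each question pair the winning slice $\win_{xy}\subset\{+1,-1\}^2$ has at most four elements, and I would read off from each possibility the induced operator relation on $\ket\phi$. A forced bit (one outcome eliminated for a player) gives a determinism relation $u(x)\tensor1\ket\phi=\pm\ket\phi$; a parity constraint (the diagonal or antidiagonal) gives the colinearity $u(x)\tensor1\ket\phi=\pm\,1\tensor u(y)\ket\phi$ by the very Cauchy--Schwarz argument used for the CHSH game; and a single forbidden corner $(a,b)$ gives the subspace containment $E(a|x)\tensor1\ket\phi=E(a|x)\tensor E(\neg b|y)\ket\phi$, i.e.\ the component of $\ket\phi$ on which Alice answers $a$ lies entirely inside Bob's complementary outcome.

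Next I would record these relations as a classical $2$--constraint satisfaction problem in the Boolean unknowns $f(x),g(y)\in\{\pm1\}$, with one clause per question pair: a forbidden corner becomes a single $2$--clause, a parity constraint an equality, a forced bit a unit clause. Following Cleve--H{\o}yer--Toner--Watrous \cite{CLEVE-HOYER-TONER-WATROUS}, the theorem then reduces to showing this instance is satisfiable, and I would argue the contrapositive: an unsatisfiable instance contains a contradictory chain of implications forcing some literal together with its negation; propagating the corresponding operator relations along that chain drives a nonzero component of $\ket\phi$ simultaneously into two orthogonal subspaces, forcing $\ket\phi=0$ --- a contradiction. A satisfying assignment $(f,g)$ is then precisely a perfect classical strategy, so the game cannot be pseudo--telepathic.

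The main obstacle is the combinatorial heart of the third step: verifying that \emph{every} unsatisfiable mixture of forced, parity, and forbidden--corner clauses already obstructs the quantum relations. The parity (colinearity) chains are immediate from the CHSH argument, but the forbidden--corner implications must be propagated as subspace containments rather than equalities, and one must check that these chain consistently with the others. By contrast, the passage from the finite--dimensional (tensor--product) setting of \cite{CLEVE-HOYER-TONER-WATROUS} to the quantum commuting model is painless: the whole argument uses only that $A_x$ commutes with $B_y$ and that both act on the single cyclic vector $\ket\phi$, and the Stinespring dilation of section~\ref{STINESPRING} supplies exactly this structure, so no idea beyond that section is required for the generalisation.
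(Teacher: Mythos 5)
You correctly read the situation: the paper itself offers no proof of this theorem. It imports it from \cite{CLEVE-HOYER-TONER-WATROUS} and contributes only the remark that the minimal Stinespring/GNS dilation of section~\ref{STINESPRING} furnishes two commuting families of \ordertwo\ observables with a common cyclic vector $\ket\phi$, so that the finite-dimensional argument transfers to the quantum commuting model. Your first and last steps reproduce exactly this contribution, and soundly: $\phi(E(\losing))=0$ does upgrade to $E(a|x)\tensor E(b|y)\ket\phi=0$ for every losing corner (a projection with vanishing expectation annihilates the GNS vector), your enumeration of binary predicates into unit clauses, parities and forbidden corners is complete, a satisfying assignment of the resulting 2-SAT instance is precisely a perfect deterministic strategy, and the commuting extension is indeed painless. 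The genuine gap sits exactly where you flag it, but it is worse than \enquote{unverified}: the propagation mechanism you propose is false as an operator statement. The containment $E(a|x)\tensor1\ket\phi=E(a|x)\tensor E(\neg b|y)\ket\phi$ is anchored to the \emph{undisturbed} vector $\ket\phi$, and same-player projections at distinct questions need not commute, so the relation does not survive cutting by the next projection in the chain. Concretely, on $\COMPLEX^2\tensor\COMPLEX^2$ take $\ket\phi=\ket{1}\tensor\ket{0}$, Alice projections $P=\ket{+}\bra{+}$ (question $x$) and $R=\ket{0}\bra{0}$ (question $x'$), and Bob projection $Q=\ket{0}\bra{0}$ (question $y$). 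Then both clause relations
\[
  \big(P\tensor(1-Q)\big)\ket\phi=0,\TAB (R\tensor Q)\ket\phi=0
\]
hold, encoding the implications \enquote{Alice's $P$-outcome at $x$ forces Bob's $Q$-outcome at $y$ forces Alice to avoid $R$ at $x'$}, yet
\[
  (R\tensor1)(P\tensor1)\ket\phi=RP\ket{1}\tensor\ket{0}=\tfrac12\,\ket{0}\tensor\ket{0}\neq0,
\]
so the chained containment fails already at length two. The CHSH-style steps you cite never meet this problem because there every chained relation is an \emph{equality of norm-one vectors} produced by unitaries, not a containment of components.

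The repair --- and, in essence, the actual argument of \cite{CLEVE-HOYER-TONER-WATROUS} --- is never to chain forbidden corners at all. Writing $\alpha_x=\phi(u(x)\tensor1)$, $\beta_y=\phi(1\tensor u(y))$, $c_{xy}=\phi(u(x)\tensor u(y))$, a forbidden corner $(a,b|x,y)$ with signs $\sigma=(-1)^a$, $\tau=(-1)^b$ reads
\[
  0=4\,p(a,b|x,y)=1+\sigma\alpha_x+\tau\beta_y+\sigma\tau c_{xy}
  \TAB\implies\TAB
  \sigma\alpha_x+\tau\beta_y=-1-\sigma\tau c_{xy}\leq0.
\]
Hence the deterministic strategy \enquote{answer along the sign of your marginal} can only violate a clause where \emph{both} marginals vanish, and there $\sigma\tau c_{xy}=-1$ forces an exact parity constraint, equivalently the colinearity $u(x)\tensor1\ket\phi=c_{xy}\cdot1\tensor u(y)\ket\phi$ of unit vectors (forced bits are automatic, since an eliminated outcome gives $\alpha_x=\pm1$). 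These colinearity relations, unlike your containments, do chain: around any cycle of tied questions the signs $c_{xy}$ multiply, and a product $-1$ would force a unit vector to equal its own negative. So the tied questions admit a consistent parity assignment by spanning-tree propagation, completing the perfect classical strategy; no 2-SAT implication graph is ever needed. With your flawed third step replaced by this marginal/parity dichotomy, the Stinespring front end of your proposal carries the whole argument to the quantum commuting model exactly as the paper asserts.
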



\bibliography{Bibliography}
\bibliographystyle{alpha-all}

\end{document}